\newtheorem{theorem}{Theorem}
\newtheorem{lemma}{Lemma}
\newtheorem{observation}{Observation}
\newtheorem{definition}{Definition}
\newproof{proof}{Proof}
\newcommand{\B}{\ensuremath{\mathcal{B}}}
\newcommand{\C}{\ensuremath{\mathcal{C}}}
\newcommand{\G}{\ensuremath{\mathcal{G}}}
\newcommand{\graph}{\G}
\newcommand{\M}{\mathcal{M}}
\newcommand{\obs}{\mathcal{O}}
\newcommand{\smallpar}[1]{{\smallskip\noindent\bfseries #1}}
\def\etal{{et~al.{}}}
\newcommand{\Reals}{{\mathbb{R}}}
\def\dist{\mathbf{d}}
\def\distg{\dist_{\graph}}
\newcommand{\eps}{\varepsilon}               
\begin{document}

\begin{frontmatter}

\title{Geodesic  Spanners for  Points in $\Reals^3$ amid  Axis-parallel Boxes }

\author[sut]{Mohammad Ali Abam}
\ead{abam@sharif.edu}

\author[sut]{Mohammad Javad Rezaei Seraji}
\ead{mjrezaei@ce.sharif.edu}

\address[sut]{Department of Computer Engineering, Sharif University of Technology, Tehran, Iran}

\begin{abstract}

Let $P$ be a set of $n$ points in $\Reals^3$ amid a bounded number of obstacles. When obstacles are axis-parallel boxes, we prove that $P$ admits an $8\sqrt{3}$-spanner with $O(n\log^3 n)$ edges with respect to the geodesic distance. This is the first geodesic spanner for points in $\Reals^3$ amid obstacles.
\end{abstract}


\end{frontmatter}

\section{Introduction}

When designing a network---like a road or a railway network---our main desire is to have a sparse network in which there is a fast connection between any pair of nodes in the network.  
This leads to the concept of \emph{spanners}, as defined next.

In an abstract setting, a metric space $\M=(P, \dist_\M)$ is given, where the
points of  $P$  represent the nodes in the
network, and $\dist_\M$ is a metric on~$P$. A \emph{$t$-spanner} for $\M$,
for a given $t>1$, is an edge-weighted graph $\graph=(P, E)$ where the weight
of each edge $(p,q)\in E$ is equal to $\dist_\M(p,q)$, and for all pairs $p,q\in P$ we have that $\distg(p,q)\leq t\cdot \dist_\M(p,q)$,
where $\distg(p,q)$ denotes the length  of a shortest path (that is, minimum-weight) 
from $p$ to $q$ in $\graph$ (the distance between $p$ and $q$ in $\graph$, for short).
Indeed, the distance between any two points
in the spanner $\graph$ approximates their original distance in the metric space $\M$
up to a factor~$t$. Any path from $p$ to $q$ in $\graph$ whose weight is at most $t \cdot \dist_\M(p,q)$ is called
a $t$-path.
The \emph{spanning ratio} (or dilation,
or stretch factor) of~$\graph$ is the minimum 
$t$ for which~$\graph$ is a $t$-spanner for the metric space $\M$.
The size of~$\graph$ is defined as the number of edges in $E$. The question
now becomes: given a desired spanning ratio $t$, how many edges do we need to obtain a
$t$-spanner?

\smallpar{Previous work.}
When the metric space $\M$ does not have any additional properties, one can get a $(2k-1)$-spanner of size $O(n^{1+1/k})$,
for any integer~$k>0$ by the method given in \cite{addjj-osswg-93} and an improvement on its main lemma (Lemma 6 in \cite{addjj-osswg-93}) in \cite{alon2002}. No methods are known to obtain a constant-spanning-ratio  spanner of size~$O(n\ \textrm{polylog}\ n)$ in general metric spaces. However, for several special
types of metric spaces, better results can be obtained. We next mention some of them.

When $\M$ is the \emph{Euclidean metric} (i.e., $P$ is a set of $n$  points  in $\Reals^d$ and the Euclidean distance is used), for any fixed $\eps>0$ one can then obtain a
$(1+\eps)$-spanner with $O(n)$ edges--see the book by Narasimhan and Smid~\cite{ns-gsn-07} for
fundamental results on geometric spanners. This  result \cite{cgmz-ohrdm-05, cg-sdpss-06,gr-odsdms-08, hm-fcnld-05, t-bealdm-04} was generalized to metric spaces of \emph{bounded  dimension}  (a metric space $\M=(P, \dist_\M)$ has
doubling dimension $d$ if any ball of radius $r$ in the space can be covered
by $2^d$ balls of radius~$r/2$) 

Recently, Abam \etal \cite{abam2019geodesic} showed that a set $P$ of $n$  points on a polyhedral terrain  admits a $(2+\eps)$-spanner with $O(n\log n)$ edges. This improved two recent results that deal with special cases
of geodesic spanners on terrains, namely additively weighted spanners \cite{abfgs-gswps-11} and spanners for points in a polygonal domain with some holes \cite{a-gspipd-18}.

\smallpar{The problem definition.} 
Let $P$ be a set of $n$ points in $\Reals^3$ amid a bounded number of disjoint obstacles (they do not even have a common boundary) 
and each obstacle is an axis-parallel box. Now, consider the metric space $\M=(P, \dist_\M)$ where $\dist_\M(p,q)$ is the geodesic distance of $p$ and $q$, i.e., the length of a shortest path from $p$ to $q$ avoiding obstacles. The goal is to construct for $\M$,  a $t$-spanner of size~$O(n\ \textrm{polylog}\ n)$  for some constant $t$. Note that we desire  to have a  spanner whose size is independent of the number of obstacles and indeed we will construct a spanner for $\M=(P, \dist_\M)$  while we are not allowed to use any Steiner points like the vertices of obstacles.  In other words, let $S$ be a complete graph on $n$ vertices where each node corresponds to a point $p\in P$. For nodes $u$ and $v$ of $S$  corresponding to $p,q \in P$,  the edge $(u,v)$  is associated with the geodesic distance of $p$ and $q$ as its weight. We indeed want to compute a near linear-size $t$-spanner $\graph$ for  $S$.

\smallpar{Our results.}  
When obstacles are convex $\alpha$-fat objects \cite{Katz} (not necessarily axis-parallel boxes), the geodesic distance of any two points is at most a constant factor (depending on $\alpha$) of their Euclidean distance \cite{paul2002}. Therefore, $\M$ is a metric space of constant doubling dimension and consequently, it has a $(1+\eps)$-spanner of size $O(n)$. When obstacles do not have additional properties, it is unknown how to construct a $t$-spanner of size~$O(n\ \textrm{polylog}\ n)$  for some constant $t$.
In this paper, we present an $8\sqrt{3}$-spanner of size~$O(n\log^3 n)$ when obstacles are axis-parallel boxes.

\section{A near linear-size spanner}
Suppose $P = \{p_1, \dots, p_n\}$ is a set of $n$ points in $\Reals^3$ amid some axis-parallel boxes as obstacles. For $p, q \in \Reals^3$ being outside obstacles, let $\sigma(p,q)$  be the geodesic distance of $p$ and $q$ (i.e., the length of a shortest path from $p$ to $q$ avoiding obstacles). Also, let $\B(p,q)$ be an axis-parallel box whose two opposite corners are $p$ and $q$ as depicted in  Fig. \ref{bpq} (i.e.,$p$ and $q$ are the endpoints of one of the $4$ main diagonals of  $\B(p,q)$). For ease of presentation, from now on, we assume that any point that we use, is outside or on the boundary of obstacles unless it is explicitly mentioned otherwise. 

\begin{figure*}
	\begin{center}
		\includegraphics[scale=0.4]{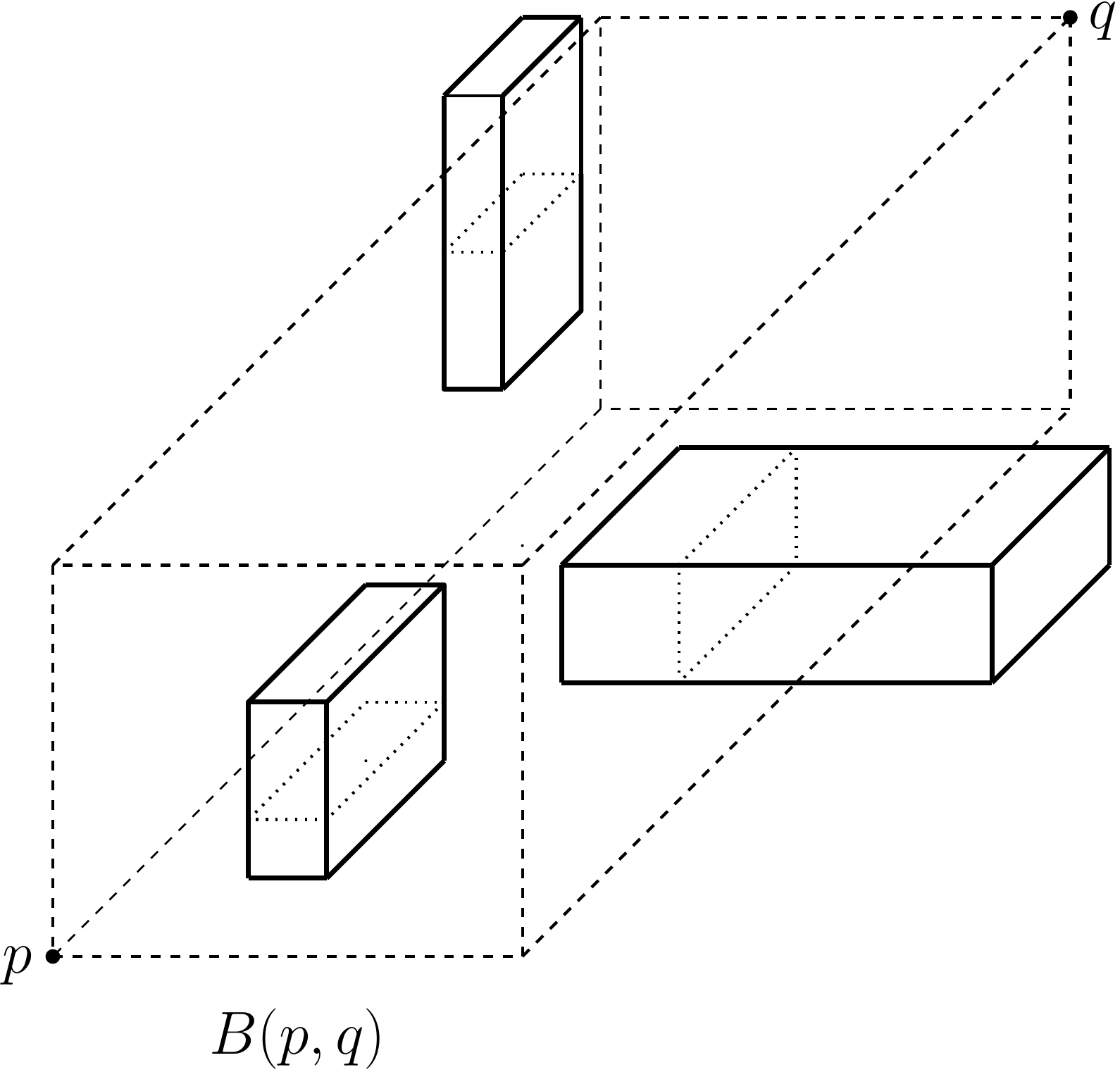}
	\end{center}
	\caption{An environment consisting of three obstacles (the solid boxes) and $\B(p,q)$ (the dashed box). The intersection of each obstacle and the boundary of $\B(p,q)$ is illustrated by dotted lines.}
	\label{bpq}
\end{figure*}

Usually $\sigma(p,q)$ is measured in the Euclidean norm (or the $L_2$ norm).  Since we will deal with the axis-parallel boxes as obstacles, it is more convenient that $\sigma(p,q)$ is measured in the Manhattan norm (or the $L_1$ norm)---several works have been devoted to computing the shortest rectilinear geodesic path in the presence of axis-parallel boxes \cite{choi95}\cite{choi96}\cite{deb91}. Let $L_i(p,q)$ ($i=1,2$) denote the $L_i$ distance of two points $p$ and $q$ in the absence of obstacles. Since for any $p,q \in \Reals^3$ we have $1/\sqrt{3}L_1(p,q) \leq L_2(p,q) \leq L_1(p,q)$, we can easily get the following observation.

\begin{observation} \label{obs:L1L2}
If $\graph$ is a $t$-spanner for the metric $\M=(P, \dist_\M)$ where distances are measured in the $L_1$ norm, then
$\graph$ is a $\sqrt{3}t$-spanner for the metric $\M=(P, \dist_\M)$ when distances are measured in the $L_2$ norm.
\end{observation}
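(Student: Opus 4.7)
The plan is to push the pointwise inequality $\tfrac{1}{\sqrt{3}}L_1(p,q)\le L_2(p,q)\le L_1(p,q)$ through to the geodesic distances, and then argue edge-by-edge in~$\graph$. First I would note that for any rectifiable curve~$\pi$ in $\Reals^3$, approximating $\pi$ by polygonal curves and summing the pointwise relation over consecutive vertices gives $|\pi|_2\le |\pi|_1\le \sqrt{3}\,|\pi|_2$, where $|\pi|_i$ denotes the $L_i$-length of $\pi$. Applying this to an obstacle-avoiding curve between $p$ and~$q$ and taking infima over all such curves yields
\[
\sigma_2(p,q)\;\le\;\sigma_1(p,q)\;\le\;\sqrt{3}\,\sigma_2(p,q),
\]
where $\sigma_i(p,q)$ is the geodesic distance measured in the $L_i$ norm. (Note the two inequalities use \emph{different} minimising paths: the right one uses an $L_2$-geodesic, the left one uses an $L_1$-geodesic.)

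Next I would unpack what it means for the same graph $\graph$ to be used in the two settings: the edge set is fixed, and only the weight of an edge $(u,v)$ changes from $\sigma_1(u,v)$ to $\sigma_2(u,v)$. Let $p,q\in P$ and let $\Pi$ be a shortest path from $p$ to $q$ in $\graph$ with $L_1$-weights; by the spanner hypothesis,
\[
\sum_{(u,v)\in\Pi}\sigma_1(u,v)\;\le\;t\,\sigma_1(p,q).
\]
Re-weighting the \emph{same} path $\Pi$ with $\sigma_2$-weights can only make it shorter, because $\sigma_2(u,v)\le\sigma_1(u,v)$ for every edge by the per-edge inequality above. Hence the $L_2$-graph distance $\distgo(p,q)$ satisfies
\[
\distgo(p,q)\;\le\;\sum_{(u,v)\in\Pi}\sigma_2(u,v)\;\le\;\sum_{(u,v)\in\Pi}\sigma_1(u,v)\;\le\;t\,\sigma_1(p,q)\;\le\;\sqrt{3}\,t\,\sigma_2(p,q),
\]
which is exactly the desired $\sqrt{3}t$-spanner inequality.

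There is no real obstacle here; the argument is a two-line chain of inequalities. The only point I would flag explicitly is that one should \emph{not} compare the $L_1$-shortest and $L_2$-shortest paths in $\graph$ directly, since they need not coincide: the correct move is to evaluate a single fixed path (the $L_1$-optimal one in $\graph$) under both weightings and use the per-edge bound $\sigma_2\le\sigma_1\le\sqrt{3}\,\sigma_2$ together with the spanner bound on $\sigma_1$.
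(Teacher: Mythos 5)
Your argument is correct and is exactly the intended one: the paper gives no explicit proof, simply asserting that the observation follows ``easily'' from the pointwise inequality $\tfrac{1}{\sqrt{3}}L_1(p,q)\le L_2(p,q)\le L_1(p,q)$, and your write-up is the natural fleshing-out of that claim (lifting the norm comparison to geodesic distances via curve lengths, then evaluating one fixed path in $\graph$ under both weightings). The subtlety you flag---using the $L_2$-geodesic for one direction and the $L_1$-geodesic for the other, and not comparing the two graph-shortest paths directly---is handled correctly.
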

Therefore,  from now on, we can assume that every distance is measured in the $L_1$ norm.

To warm up, we first show an obvious lower bound on the spanning ratio. Indeed, we show that there is a configuration of points and obstacles such that any non-complete graph has a spanning ratio larger than $2-\eps$.
\begin{lemma}
For any $\eps>0$, there exists a set $P$  of $n$ point in $\Reals^3$ amid a bounded number of axis-parallel boxes as obstacles such that every non-complete graph with vertex set $P$ has the spanning ratio larger than $2-\eps$.
\end{lemma}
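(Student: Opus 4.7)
The plan is to construct, for each $n$ and $\eps>0$, a set $P$ of $n$ points in $\Reals^3$ together with a bounded number of axis-parallel box obstacles such that removing any single edge of the complete graph on $P$ already raises its spanning ratio above $2-\eps$. This immediately implies the lemma, since any non-complete graph on $P$ misses at least one edge $(p,q)$ and so its shortest $p$-to-$q$ walk must itself be a detour through the remaining vertices. By Observation~\ref{obs:L1L2} I work throughout in the $L_1$ norm.

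The sufficient condition I aim for is \emph{near-equidistance}: all $\binom{n}{2}$ pairwise geodesic distances $\sigma(p_i,p_j)$ lie within a factor $1+\eps/20$ of a common value $L$. Once this holds, for any pair $(p_i,p_j)$ the direct distance is at most $L(1+\eps/20)$, while any walk $p_i=q_0,q_1,\ldots,q_k=p_j$ through other vertices with $k\ge 2$ has total length at least $2L/(1+\eps/20)$ (each hop is at least $L/(1+\eps/20)$). The ratio is therefore at least $2/(1+\eps/20)^2$, which a short calculation shows exceeds $2-\eps$ for every $\eps\in(0,2)$; hence every edge of the complete graph is essential to any $(2-\eps)$-spanner.

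To realise near-equidistance, for the warm-up $n=3$ I take $p_1=(0,0,0)$, $p_2=(1,1,0)$, $p_3=(-1,1,0)$ with no obstacles: all three pairwise $L_1$ distances equal $2$, and removing any edge leaves only a length-$4$ detour through the third vertex, giving spanning ratio exactly $2$. For $n\le 6$ I use a subset of the $L_1$-octahedron $\{\pm e_1,\pm e_2,\pm e_3\}$, whose six vertices are pairwise at $L_1$ distance $2$. For $n>6$ I introduce a constant number of thin axis-parallel box obstacles and place each additional point in the ``shadow'' of such a box, so that its direct path to every other point is blocked and the geodesic is forced to route around the box, yielding length close to $L=2$.

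The principal difficulty is the last extension. In $\Reals^3$ without obstacles one cannot have more than six pairwise $L_1$-equidistant points, so the bounded number of obstacles must artificially bend the metric to accommodate $n$ near-equidistant points; worse, a careless placement of an added point can create a \emph{shorter} detour through it for some previously settled pair, destroying near-equidistance. The delicate engineering is therefore to choose explicit coordinates for the extra points and dimensions for the surrounding boxes so that each new geodesic is simultaneously forced upward by an obstacle and kept safely away from existing geodesics, followed by a case analysis verifying the near-equidistance condition for every pair.
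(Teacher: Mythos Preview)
Your high-level strategy---force all $\binom{n}{2}$ geodesic distances into a tiny window $[L,(1+\delta)L]$ so that any two-hop detour costs at least $2L$ while the missing direct edge costs at most $(1+\delta)L$---is exactly right, and it is also what the paper does. Your ratio arithmetic is fine.

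The gap is that for $n>6$ you never actually build the configuration. You announce that ``a constant number of thin axis-parallel box obstacles'' will suffice and that each extra point sits in the shadow of such a box, but you give no coordinates and no box dimensions, and you close by conceding that what remains is ``delicate engineering'' plus a case analysis you have not carried out. That missing construction \emph{is} the content of the lemma for general $n$; without it you have a plan, not a proof. It is also unclear how a number of obstacles independent of $n$ could manufacture $n$ near-equidistant points at all; if you really meant $O(n)$ obstacles---still ``bounded'' in the sense of the lemma---then the octahedron scaffolding is buying you nothing over a direct construction.

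The paper sidesteps all of this with a one-line construction that realises your near-equidistance condition immediately for every $n$: put the $n$ points on the $x$-axis inside an interval of length less than $\eps$, and between every pair of consecutive points erect a thin wall---an axis-parallel box of $x$-thickness $0$ and $y$- and $z$-extent $s$, centred on the axis. Any geodesic between two of the points must go around at least one wall, so $s\le\sigma(p,q)\le s+\eps$ for every pair; this is exactly your near-equidistance with $L=s$. A missing edge then forces a detour of length at least $2s$, and the spanning ratio is at least $2s/(s+\eps)>2-\eps$ once $s$ is chosen large enough. This uses $n-1$ obstacles, requires no case analysis, and handles all $n$ uniformly.
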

\begin{proof}
Suppose $P$ is a set of $n$ points on the $x$-axis such that the distance of the rightmost point and the leftmost point is less than $\eps$.
Imagine there is an obstacle  between any two consecutive points where the $x$-side, $y$-side and $z$-side of each obstacle are of length $0$, $s$ and $s$ for some constant $s>0$, respectively, and the mass center of each obstacle is on the $x$-axis (see Fig. \ref{planes}). 
It is easy to see for any two points $p,q \in P$, we have $s\leq \sigma(p,q)\leq s+\eps$.
Consider a non-complete graph $\graph$ with vertex set $P$, and let $p$ and $q$ be two points in $P$ that are not connected by an edge in $\graph$. Then, $\dist_\graph(p,q)\geq 2s$. Therefore, the spanning ratio is at least $\frac{2s}{2+\eps}$ which is larger than $2-\eps$ if we set $s> 2-\eps^2/2$.
\end{proof}

\begin{figure*}
	\begin{center}
		\includegraphics[scale=0.75]{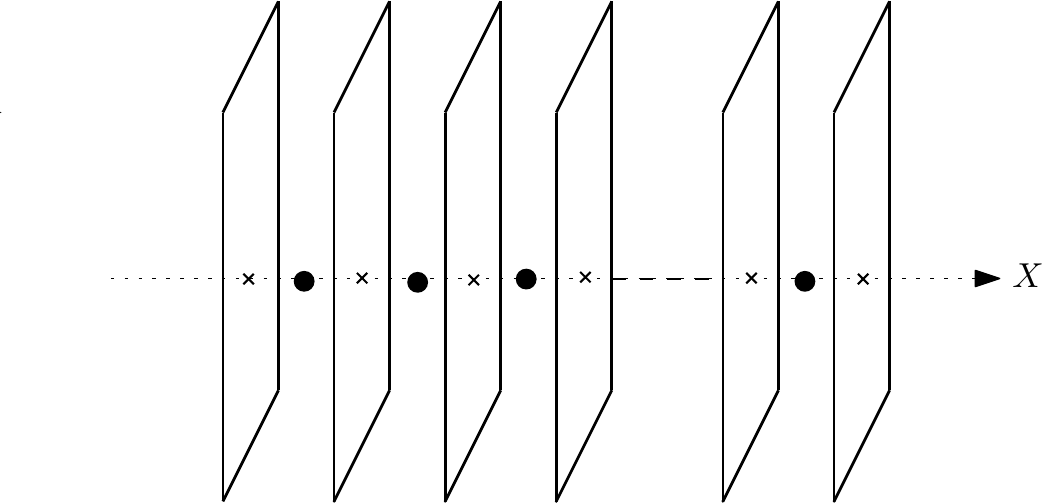}
	\end{center}
	\caption{A configuration not admitting spanning ratio less than $2-\eps$ with $o(n^2)$ edges}
	\label{planes}
\end{figure*}

Next, we explain how to construct an 8-spanner in the $L_1$ metric. We start with the main tool of our spanner construction that we believe is of independent interest.

\begin{lemma} \label{lem:main}
For any two points $p, q \in \Reals^3$ and any point $o$ inside (or on the boundary of) $\B(p,q)$, we have
$\sigma(p,o)+\sigma(o,q) \leq 4 \cdot  \sigma(p,q) $.

\end{lemma}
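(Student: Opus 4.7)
The plan is to reduce the bound to finding a good ``stopover'' point on a shortest $L_1$-path from $p$ to $q$ avoiding obstacles. Let $\pi$ be such a shortest path; for any point $x$ on $\pi$, the two sub-paths of $\pi$ from $p$ to $x$ and from $x$ to $q$ satisfy $\sigma(p,x)+\sigma(x,q)=\sigma(p,q)$. The walk $p\to x\to o\to x\to q$ then has length $\sigma(p,q)+2\sigma(x,o)$, and splitting it at the visit to $o$ gives
\[\sigma(p,o)+\sigma(o,q)\leq \sigma(p,q)+2\sigma(x,o).\]
Hence it suffices to exhibit a single $x\in\pi$ with $\sigma(x,o)\leq \frac{3}{2}\sigma(p,q)$.

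To find such an $x$, I would exploit that $o\in\B(p,q)$. Without loss of generality, assume $p_i\leq q_i$ for each coordinate $i$, so that $p_i\leq o_i\leq q_i$. Since $\pi$ is a continuous curve from $p$ to $q$, it must cross each of the three axis-aligned planes $H_i=\{y\in\Reals^3:y_i=o_i\}$; let $x^{(i)}$ be a point of $\pi\cap H_i$, so that $x^{(i)}$ shares coordinate $i$ with $o$ while its remaining two coordinates lie inside the projections of $\pi$. I would then consider the three candidate stopovers $x^{(1)},x^{(2)},x^{(3)}$ and argue that at least one satisfies the required bound, possibly after choosing, for each axis, the crossing whose remaining coordinates are closest to those of $o$.

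The heart of the argument, and where I expect the main difficulty, is bounding $\sigma(x^{(i)},o)$. There are two contributions. First, the $L_1$-distance $L_1(x^{(i)},o)$ should be controlled by the coordinate excursions of $\pi$ in the two axes $j\neq i$, which are themselves charged to $\sigma(p,q)$. Second, obstacles may still separate $x^{(i)}$ from $o$; here the key idea I would push is that any axis-parallel box forcing a long detour between $x^{(i)}$ and $o$ also lies in a region that $\pi$ must already circumnavigate in order to reach $q$, so its contribution is already paid for by $\sigma(p,q)$. Pinning down exactly the constant $\frac{3}{2}$ (and hence the final factor $4$) will require a careful case analysis leveraging the axis-parallel structure of both $\B(p,q)$ and the obstacles, together with the freedom to select among the three candidate crossings; I expect the tight case to occur when $o$ sits deep inside $\B(p,q)$ with obstacles blocking it from $\pi$ along several coordinate directions at once.
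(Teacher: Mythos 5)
Your opening reduction is exactly the paper's: fix a shortest obstacle-avoiding path $g$ from $p$ to $q$, find a point $r$ on $g$ with $\sigma(o,r)\leq \frac{3}{2}\,\sigma(p,q)$, and conclude $\sigma(p,o)+\sigma(o,q)\leq \sigma(p,q)+2\sigma(o,r)\leq 4\sigma(p,q)$. The gap is that you never establish the existence of such a point: everything after ``it suffices to exhibit a single $x$'' is a plan rather than an argument, and the plan as stated does not go through. A crossing point $x^{(i)}$ of $g$ with the plane $\{y:\,y_i=o_i\}$ gives you no control over $\sigma(x^{(i)},o)$: an obstacle can separate $o$ from $x^{(i)}$ without intersecting $g$ at all (e.g.\ a large thin box sitting between $o$ and the portion of $g$ near $x^{(i)}$), so its detour cost is not ``already paid for'' by $\sigma(p,q)$ under any charging you have described. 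That charging claim is precisely the hard content of the lemma; asserting it is not proving it. (Even the easy part is shaky: for an arbitrary point $x$ on $g$ one only gets $L_1(x,o)\leq 2\sigma(p,q)$ from $L_1(p,x)\leq\sigma(p,q)$ and $L_1(p,o)\leq L_1(p,q)$, so you would also need to explain how your choice of crossing beats $\frac{3}{2}$ even ignoring obstacles.)

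The paper sidesteps this by never choosing a point on $g$ first and bounding its geodesic distance to $o$ afterwards; instead it \emph{constructs} an explicit obstacle-avoiding path from $o$ to some point of $g$ and bounds its length directly. With $p\leq q$ coordinatewise and WLOG $L_1(o,q)\leq\frac12 L_1(p,q)$, one first moves from $o$ monotonically in the positive $x$-, $y$-, $z$-directions (always possible amid axis-parallel boxes, since when one positive direction is blocked by a face of a box the other two are locally free) until reaching a point $w$ on an edge of $\B(p,q)$ incident to $q$; monotonicity makes this path have length exactly $L_1(o,w)\leq L_1(o,q)\leq\frac12\sigma(p,q)$. Then a pursuit argument is run: $s_1$ traverses $g$ from $q$ toward $p$ while $s_2$ starts at $w$, moves in the positive $z$-direction whenever it can, otherwise mirrors the $x$- and $y$-motion of $s_1$ (which is always unobstructed, since $s_2$ is then blocked from above by a box, leaving the other directions free). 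The two points must meet at some $r$ on $g$, and the path of $s_2$ has length at most $|g|=\sigma(p,q)$ because its horizontal motion copies that of $g$ and its vertical motion is monotone and bounded by the $z$-variation of $g$. This yields $\sigma(o,r)\leq\frac32\sigma(p,q)$. If you want to complete your proof, you need either this construction or a genuine substitute for it; the plane-crossing candidates alone will not supply the bound.
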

\begin{proof}
Let $g$ be a geodesic path from $p$ to $q$. See this and the other notion  in Fig. \ref{box1}---since the $3d$ illustration is a bit confusing, we use a $2d$ illustration to denote our notation. The length of $g$ of course is $\sigma(p,q)$.  We claim (and prove later) that there is a point $r$ on $g$ such that 
$\sigma(o,r) \leq (3/2) \cdot\sigma(p,q)$. By the triangle inequality, we know that $\sigma(p,o) \leq \sigma(p,r)+\sigma(r,o)$ and 
$\sigma(o,q) \leq \sigma(o,r)+\sigma(r,q)$. Summing up these two inequalities  and using the claim, we have 
$$\sigma(p,o)+\sigma(o,q) \leq 4  \cdot \sigma(p,q)$$

\begin{figure*}
	\begin{center}
		\includegraphics[scale=0.7]{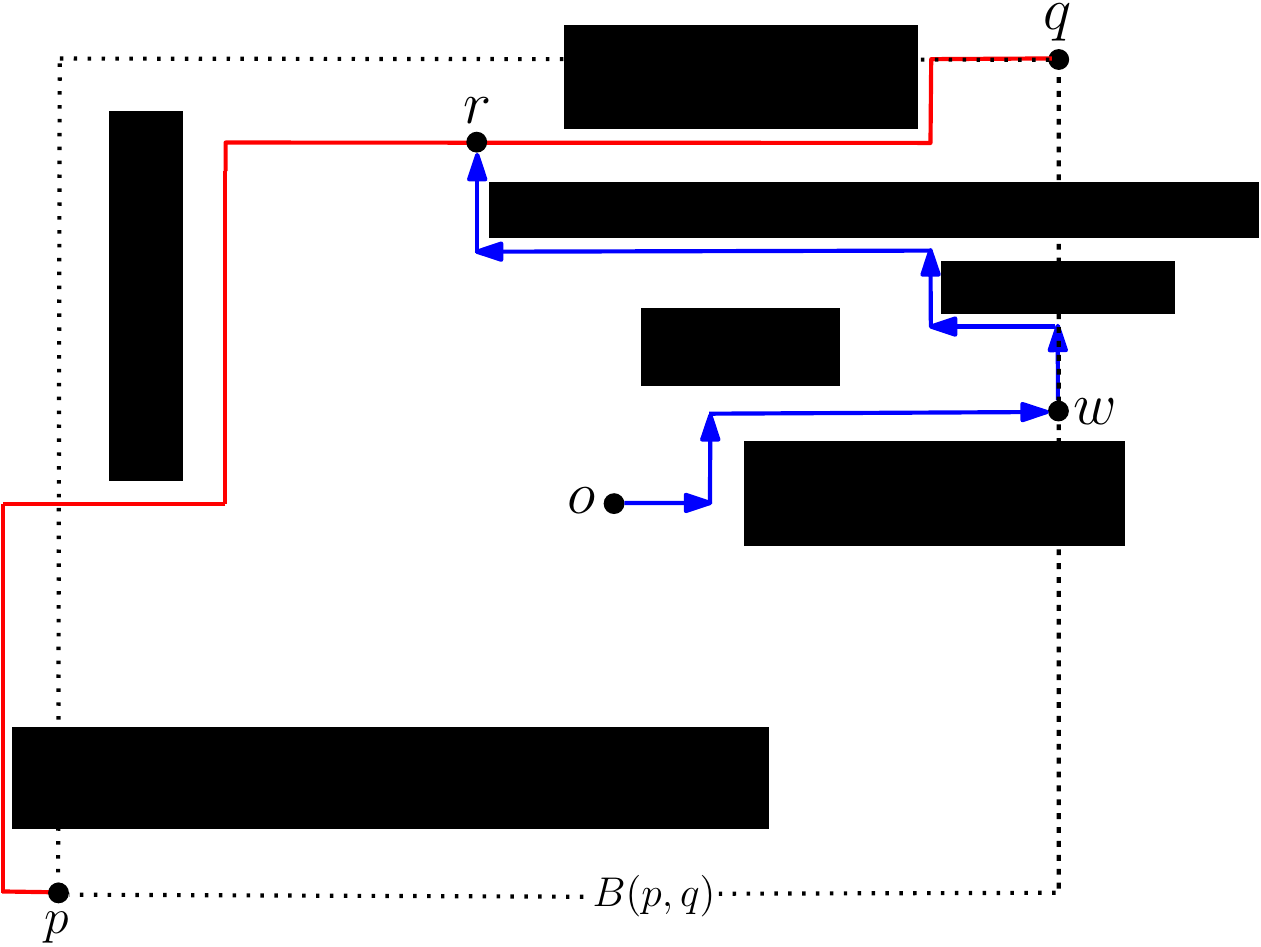}
	\end{center}
	\caption{A $2d$ illustration for the proof  Lemma \ref{lem:main}. The path $g$, a  geodesic  path from $p$  to $q$, is illustrated by a red polyline.}
	\label{box1}
\end{figure*}

Then, it remains to prove the claim. W.l.o.g., assume that $x(p) \leq x(q), y(p) \leq y(q)$ and  $z(p) \leq z(q)$ where $x(.), y(.)$ and $z(.)$ denote the $x, y$ and $z$-coordinates, respectively.  We know that $L_1(p,q) = L_1(p,o)+L_1(o,q)$---we recall that $L_1(p,q)$ denotes the $L_1$ distance of $p$ and $q$ in the absence of obstacles and obviously $L_1(p,q) \leq \sigma(p,q)$. Therefore, one of $L_1(p,o)$ and $L_1(o,q)$ is at most $(1/2)\cdot L_1(p,q)$. W.l.o.g., assume $L_1(o,q) \leq (1/2)\cdot L_1(p,q)$. 

Now, move a  point $s$  continuously on the positive direction of the $x$-axis, $y$-axis or $z$-axis, starting at $o$ and avoiding obstacles, until $s$ reaches a point $w$ on one of the edges of $\B(p,q)$ incident to $q$. This is always possible as at each position of $s$, there is at least one direction out of the three directions (the positive $x$-axis, $y$-axis and $z$-axis) such that we can move $s$ on that direction (i.e., $s$ is not blocked by any obstacle or any side of $\B(p,q)$ in that direction).  For example, assume $s$ moves on the positive direction of the $x$-axis and hits an obstacle. The other two directions are not blocked by the obstacle and $s$ can continue its motion on either  the positive $y$-axis or the positive $z$-axis (note that obstacles are axis-parallel boxes).
W.l.o.g., assume $w$ is on the edge parallel to the $z$-axis. The length of the path traveled by $s$ from $o$ to $w$ is $L_1(o,w)$ as the $x$, $y$ and $z$-coordiantes of $s$ never decrease in its motion. This implies $\sigma(o,w)\leq L_1(o,w)$, and since we always have $\sigma(o,w)\geq L_1(o,w)$, we get $\sigma(o,w)= L_1(o,w)$.  Since $w$ is on the boundary of $\B(o,q)$ (the box with $oq$ as the main diagonal), we have  $L_1(o,w) \leq L_1(o,q)$. Therefore,  $\sigma(o,w)=L_1(o,w) \leq L_1(o,q)\leq (1/2)\cdot L_1(p,q) \leq (1/2)\cdot \sigma(p,q)$.

Now, simultaneously move  continously  two points $s_1$ and $s_2$, respectively starting  at $q$ and $w$, as follows---we recall that $x(w)=x(q), y(w)=y(q)$ and $z(w)\leq z(q)$. The point $s_1$ moves on $g$,   the  geodesic path from $q$ to $p$ that we fixed it at the begining of the proof, and both $s_1$  and $s_2$ follow the  rules described below  in the given order (i.e., we never run Step (ii) unless Step (i) is impossible to run, and we never run Step (iii) unless Steps (i)  and (ii) are impossible to run):

\begin{itemize}
\item[(i)] If  $s_2$ is free to move on the positive direction of the $z$-axis, $s_2$ moves on the  positive direction of the $z$-axis and $s_1$ stays unmoved.
\item[(ii)] If $s_1$ moves on the positive or negative direction of the $z$-axis , $s_2$ stays unmoved. 
\item[(iii)] Otherwise, $s_2$ follows $s_1$ in the direction of   the $x$-axis or $y$-axis (i.e., their $x$ and $y$-coordinates are the same during their motions). 
\end{itemize}
If we are at Step (iii) during the above process,  $s_2$ can freely follow $s_1$ in the direction of the $x$-axis and $y$-axis without any obstacle blocking it.  The reason is that when we are at Step (iii), we know $s_2$ is blocked  in the positive direction of the $z$-axis. So, the other two directions are free for $s_2$ to move. 

Initially,	 $z(q)=z(s_1) \geq z(s_2)=z(w)$.  At some time $s_1$ and $s_2$  must collide at a point $r$ on $g$ due to the following reasons: (i) $s_2$ moves on the positive direction of  the $z$-axis, (ii) $s_1$ at some time reaches a point whose $z$-coordinate is $z(w)$ (note that $s_1$ moves on $g$ and finally it must reach $p$ whose $z$-coordinate is at most $z(w)$), and (iii) $s_1$ and $s_2$ keep their $x$ and $y$ coordinates the same throughout the process.

The length of the path traveled by $s_2$ from $w$ to $r$  is at most the length of $g$ (i.e., $\sigma(p,q)$). The reason is that $s_2$ behaves like $s_1$ on the $x$ and $y$-axis and the $z$ distance traveled by $s_2$  is at most the $z$ distance traveled by any moving point from $p$ to $q$ on $g$ (note that $s_2$ only moves in the positive direction of the $z$-axis  and starts its motion from $w$ whose $z$-coordinate is at least $z(p)$). 
 
Now consider the paths described above from $o$ to $w$ and then from $w$ to $r$.  The lengths of these paths are at most $(1/2)\cdot\sigma(p,q)$ and $\sigma(p,q)$, respectively. Therefore, $\sigma(o,r) \leq (3/2)\cdot \sigma(p,q)$ as we claimed.
\end{proof}
\paragraph{Remark.} In the 2-dimensional space where obstacles are rectangles, we can easily show that $\sigma(p,o)+\sigma(o,q) \leq 3 \cdot \sigma(p,q) $ and this is tight. Similar to the proof given in the lemma, we can show that there exists a point $r$ on $g$ such that $\sigma(o,r) \leq \sigma(p,q)$. The proof is much simpler\ unlike the proof given in the lemma. To this end, set $o$ to be the origin. The geodesic path $g$ definitely intersects the $x$-axis and the $y$-axis. W.l.o.g, assume the case illustrated in Fig.~\ref{box2}(a) happens. If we start continuously moving from $o$  to left or up avoiding obstacles, we definitely reach a point $r$ on $g$. It is easy to see the length of this path is at most $\sigma(p,q)$. Fig. \ref{box2}(b) shows that the inequality is tight. 

\begin{figure*}
	\begin{center}
		\includegraphics[width=1.0\textwidth]{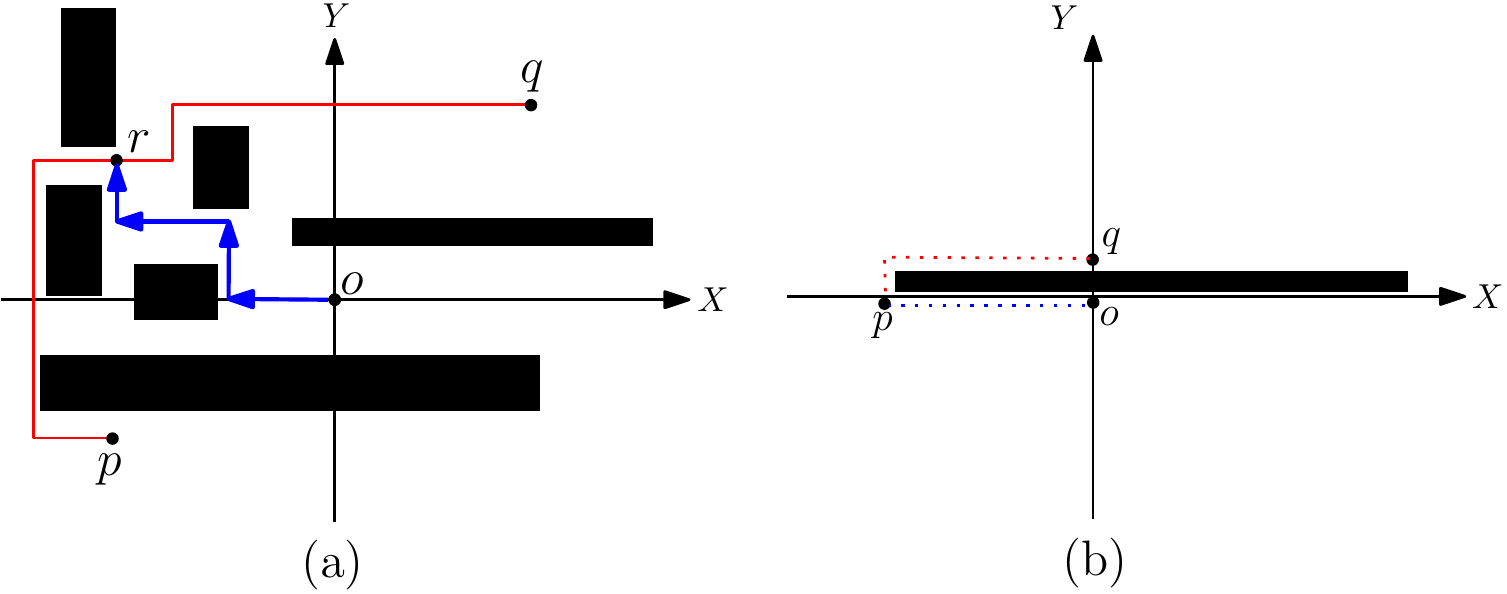}
	\end{center}
	\caption{(a) An instance of the 2-dimensional variant of the problem, (b) A tight example}
	\label{box2}
\end{figure*}

Before we explain our $t$-spanner construction, we introduce the final tool used in our construction, namely the cone-separated pair decomposition \cite{ab-ks-11}. Let $\C$ be the set of four cones defined by the $xy$, $xz$ and $yz$-planes (with an apex at  the intersection point of these three planes) that are above the $xy$-plane. For a cone $\mu \in \C$ and any point $p\in \Reals^3$, let $\mu(p)$ denote the translated copy of $\mu$ whose apex coincides with $p$. Also let $\bar{\mu}(p)$ be the reflection of $\mu(p)$ about $p$. For a cone $\mu \in \C$ and a set $P$ of $n$ points in $\Reals^3$, the cone-separated pair decomposition is defined as follows:

\begin{definition} \cite{ab-ks-11}
A \emph{cone-separated pair decomposition}, or \emph{CSPD} for short, for $P$ with respect to $\mu$
is a collection $\Psi_\mu := \{(A_1,B_1),\ldots,(A_m,B_m)\}$ of pairs of subsets from $P$
such that
\begin{enumerate}
\item[(i)] For every two points $p,q\in P$ with $q\in\mu(p)$,
           there is a unique pair $(A_i, B_i)\in \Psi_\mu$ such that $p \in A_i$ and $q \in B_i$.
\item[(ii)] For any pair $(A_i, B_i)\in \Psi_\mu$ and every two points $p \in A_i$ and $q \in B_i$,
            we have $q\in \mu(p)$ and, hence, $p\in\bar\mu(q)$.
\end{enumerate}
\end{definition}

 Abam and de Berg \cite{ab-ks-11} showed that  a CSPD $\Psi_\mu := \{(A_1,B_1),\ldots,(A_m,B_m)\}$ can be constructed with the property that $\sum_{i=1}^m |A_i|+|B_i| = O(n\log^3n)$.

 \smallpar{Spanner construction.} Next we show how to compute a spanner~$\graph=(P, E)$ for the metric space $\M = (P, \dist_\M)$
 for the set $P$ of $n$ points in $\Reals^3$ amid axis-parallel boxes as obstacles. 
 
 \begin{enumerate}
\item For each of the four cones $\mu \in \C$, we construct a CSPD $\Psi_\mu$.
\item For each pair $(A, B) \in \Psi_\mu$, let $o$ be a point such that $\mu(o)$ and $\bar\mu(o)$ contain all points of $B$ and all points of $A$, respectively. The point $o$ may be inside an obstacle $\obs$.  Let $o_{x^+}$ and $o_{x^-}$ be the extreme points on $\obs$ (recall that $\obs$ is an axis-parallel box) respectively in the positive $x$-axis and the negative $x$-axis such that $y(o)=y(o_{x^+})=y(o_{x^-})$ and $z(o)=z(o_{x^+})=z(o_{x^-})$. If $o$ is not inside any obstacle, then $o_{x^+}=o_{x^-}=o$. We define $o_{y^+}, o_{y^-}, o_{z^+}$ and $o_{z^-}$ in a similar way for the $y$-axis and $z$-axis.
For each point in the set $\{o_{x^+}, o_{x^-}, o_{y^+}, o_{y^-}, o_{z^+}, o_{z^-}\}$, for instance $o_{x^+}$, we find a point $p \in A\cup B$ whose geodesic distance to $o_{x^+}$ (i.e., $\sigma(p, o_{x^+})$) is minimum. For each 
$q \in A\cup B$, we add the edge $(p,q)$ to our spanner $\graph$. We recall that the weight of edge $(p,q)$  comes from the metric space $\M$ which is the geodesic distance from $p$ to $q$. 
\end{enumerate}

\begin{lemma}\label{lem:spanner}
	The construction above gives an $8$-spanner with respect to the geodesic distance.
	Moreover, the spanner has $O(n\log^3 n)$ edges.
\end{lemma}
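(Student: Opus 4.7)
The plan is to bound the edge count and the spanning ratio separately. The edge count is the easy half: each CSPD pair $(A,B) \in \Psi_\mu$ contributes at most six stars (one per choice of axis-extreme point $o_{x^{\pm}}, o_{y^{\pm}}, o_{z^{\pm}}$), each of which has at most $|A\cup B|-1$ edges, so the total over all four cones is $O\!\left(\sum_i (|A_i|+|B_i|)\right) = O(n\log^3 n)$ by the CSPD size bound cited above.

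For the spanning ratio, by Observation~\ref{obs:L1L2} it suffices to prove the $8$-spanner property in the $L_1$ metric. Take distinct $p,q\in P$. By symmetry, pick the unique cone $\mu\in\C$ with $q\in\mu(p)$ and the unique pair $(A,B)\in\Psi_\mu$ with $p\in A$ and $q\in B$; let $o$ be the apex chosen by the construction. CSPD condition (ii) translates into $p\in\bar\mu(o)$ and $q\in\mu(o)$, which coordinate-wise say that every coordinate of $p$ is dominated by the corresponding coordinate of $o$, which is in turn dominated by that of $q$. Hence $o$ lies inside the box $\B(p,q)$. If $o$ avoids every obstacle, Lemma~\ref{lem:main} applied with this $o$ immediately gives $\sigma(p,o)+\sigma(o,q)\leq 4\sigma(p,q)$.

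The main obstacle is the case when $o$ lies inside some axis-parallel obstacle $\obs$. Here I would argue that at least one of the six boundary replacements $o_{x^{\pm}}, o_{y^{\pm}}, o_{z^{\pm}}$ still lies in $\B(p,q)$. Each of these points shares two coordinates with $o$, so the two shared coordinates are automatically in the correct ranges; the failure of, say, $o_{x^+}$ to lie in $\B(p,q)$ would then force the $x$-extent of $\obs$ to protrude past that of $\B(p,q)$ on the positive side, and analogously in the five other directions. If all six points failed, the box $\obs$ would contain $\B(p,q)$ in every coordinate and therefore contain its corners $p$ and $q$, contradicting the standing assumption that points of $P$ lie outside obstacles. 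Picking any $o^*\in\{o_{x^\pm},o_{y^\pm},o_{z^\pm}\}\cap\B(p,q)$, which is on $\partial\obs$ and hence a valid point in our setting, Lemma~\ref{lem:main} yields $\sigma(p,o^*)+\sigma(o^*,q)\leq 4\sigma(p,q)$.

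To finish, let $p^*\in A\cup B$ be the vertex chosen by the construction as the closest point in $A\cup B$ to the pivot $o^*$ in geodesic distance. Since $p,q\in A\cup B$, minimality gives $\sigma(p^*,o^*)\leq\min\{\sigma(p,o^*),\sigma(q,o^*)\}$. The construction inserts the edges $(p^*,p)$ and $(p^*,q)$ into $\graph$, so two applications of the triangle inequality yield
$$\distg(p,q)\;\leq\;\sigma(p,p^*)+\sigma(p^*,q)\;\leq\;2\bigl(\sigma(p,o^*)+\sigma(o^*,q)\bigr)\;\leq\;8\,\sigma(p,q),$$
establishing the $8$-spanner property in $L_1$, and therefore the $8\sqrt{3}$-spanner property in $L_2$ via Observation~\ref{obs:L1L2}.
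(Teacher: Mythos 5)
Your proposal is correct and follows essentially the same route as the paper: the same $6(|A|+|B|)$ edge count per CSPD pair, the same argument that some point of $\{o_{x^\pm},o_{y^\pm},o_{z^\pm}\}$ must lie in $\B(p,q)$ (else $\obs$ would swallow $p$ and $q$), and the same combination of Lemma~\ref{lem:main} with the minimality of the chosen star center to get the factor $2\cdot 4=8$. No gaps.
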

\begin{proof}
The number of edges we add to the spanner  for each pair $(A,B)$ of a CSPD is at most $6(|A|+|B|)$ (6 comes from  
$|\{o_{x^+}, o_{x^-}, o_{y^+}, o_{y^-}, o_{z^+}, o_{z^-}\}|$). Since $|\C|=4$ and for each $\mu \in \C$, $\sum_{(A,B) \in \Psi_\mu} |A|+|B| = O(n\log^3 n)$, the total number of edges we have in our spanner is $O(n\log^3 n)$.
 
We next bound the spanning ratio. Let $p,q$ be two arbitrary points in $P$. There is $\mu \in \C$ and a pair $(A, B) \in \Psi_\mu$ such that $p \in A$ and $q \in B$.
Consider the point $o$ and the obstacle $\obs$ and the set  $\{o_{x^+}, o_{x^-}, o_{y^+}, o_{y^-}, o_{z^+}, o_{z^-}\}$ as defined in the construction. Note that if $o$ is not inside any obstacle, all these six points are equal to $o$.

We first prove that at least one point in the set $\{o_{x^+}, o_{x^-}, o_{y^+}, o_{y^-}, o_{z^+}, o_{z^-}\}$ is inside $\B(p,q)$. W.l.o.g., assume $x(p) \leq x(q), y(p) \leq y(q)$ and $z(p) \leq z(q)$ where $x(.)$, $y(.)$ and $z(.)$ denote the $x$ $y$ and $z$-coordiantes, respectively. It is obvious that $o$ is inside $\B(p,q)$ (notice that $q \in \mu(o)$ and $p \in \bar\mu(o)$). If $o_{x^+}, o_{x^-} \not\in \B(p,q)$, then $x(o_{x^+}) > x(q)$ and $ x(o_{x^-}) < x(p)$. Similar inequalities hold for the $y$-axis and the  $z$-axis if $o_{y^+}, o_{y^-}, o_{z^+}, o_{z^-} \not\in \B(p,q)$. All these together imply both $p$ and $q$ must be inside  $\obs$ which is  a contradiction.

W.l.o.g., assume  $o_{x^+} \in \B(p,q)$. Let $r \in A \cup B$ be the point such that $\sigma(r,o_{x^+})$ is minimum among all points in $A \cup B$. We know that edges $(p,r)$ and $(q,r)$ exist in our spanner. Therefore,
	\[
	\begin{array}{lll}
	\dist_\graph(p,q) & \leq & \sigma(p,r)+\sigma(r,q) \\[1mm]
	&   \leq & (\sigma(p, o_{x^+})+\sigma(o_{x^+},r))+(\sigma(r, o_{x^+})+\sigma(o_{x^+},q))   \\[1mm]
	&   = & (\sigma(p, o_{x^+})+ \sigma(o_{x^+},q)) + 2 \sigma(o_{x^+},r) \\[1mm]
    & \leq & 2 (\sigma(p, o_{x^+})+ \sigma(o_{x^+},q)) \\[1mm]
    & \leq & 8 \cdot \sigma(p,q)
	\end{array}
	\]
The last inequality is obtained from Lemma \ref{lem:main}.
\end{proof}
\smallpar{Remark.} We  only
focused on proving the existence of the spanner and the construction time of the spanner was not our desire. But it is easy to see that the spanner can be computed in polynomial time based on $n$ and $m$ where $m$ is the number of obstacles---see \cite{ab-ks-11} and \cite{choi95}
for how to compute CSPDs and the shortest $L_1$  geodesic paths in polynomial time, respectively.

Putting all this together, we get our main theorem.
\begin{theorem}
Suppose $P$ is a set of $n$ point in $\Reals^3$ amid a bounded number of obstacles, and obstacles are axis-parallel boxes.
$P$ admits an $8\sqrt{3}$-spanner with $O(n\log^3 n)$ edges with respect to the geodesic distance where distances are measured in the $L_2$  norm. 
\end{theorem}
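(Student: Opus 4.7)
The plan is to derive this theorem as an immediate corollary of Lemma \ref{lem:spanner} and Observation \ref{obs:L1L2}. All of the geometric work has already been done: Lemma \ref{lem:main} bounds $\sigma(p,o)+\sigma(o,q)$ by $4 \sigma(p,q)$ whenever $o \in \B(p,q)$, and Lemma \ref{lem:spanner} uses this together with the CSPD of Abam and de Berg to build, from $P$, a graph $\graph = (P,E)$ with $O(n \log^3 n)$ edges that is an $8$-spanner with respect to the $L_1$ geodesic distance. The remaining task is purely a change-of-norm bookkeeping step.

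Specifically, I would first invoke Lemma \ref{lem:spanner} to produce the graph $\graph$, whose edge count $O(n \log^3 n)$ is already exactly what the theorem demands. I would then apply Observation \ref{obs:L1L2} with $t=8$: the same graph $\graph$ that realises dilation $8$ under the $L_1$ geodesic distance automatically realises dilation $8\sqrt{3}$ under the $L_2$ geodesic distance. The underlying reason, spelled out in the observation, is that in $\Reals^3$ the pointwise bound $L_2(p,q) \le L_1(p,q) \le \sqrt{3}\,L_2(p,q)$ lifts to the obstacle-avoiding setting (the family of admissible paths is the same; only the way we measure length changes), so the $L_1$ and $L_2$ geodesic distances are related by the same factor of $\sqrt{3}$. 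Multiplying the spanning ratio by $\sqrt{3}$ then gives the claimed $8\sqrt{3}$.

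There is no real obstacle at this final step, since the structural content (both the key inequality of Lemma \ref{lem:main} and the CSPD-based construction of Lemma \ref{lem:spanner}) is already in place. In fact, the only thing I would be careful about when writing the argument out is to emphasise that we are reusing literally the same graph $\graph$ for the $L_2$ conclusion, so the $O(n \log^3 n)$ bound on $|E|$ is preserved exactly; only the interpretation of the edge weights and of the target metric changes.
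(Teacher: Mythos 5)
Your proposal is correct and is exactly the paper's argument: the theorem is stated as an immediate consequence of Lemma~\ref{lem:spanner} (the $8$-spanner with $O(n\log^3 n)$ edges in the $L_1$ geodesic metric) combined with Observation~\ref{obs:L1L2} to convert the spanning ratio to $8\sqrt{3}$ in the $L_2$ norm. No further comment is needed.
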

\section{Conclusion}
We have shown that any set of $n$ points in $\Reals^3$ amid axis-parallel boxes as obstacles  admits a geodesic spanner
of spanning ratio $8\sqrt{3}$ with $O(n\log^3 n)$ edges. This is the first geodesic spanner for
points in $\Reals^3$ amid obstacles. We leave  designing a spanner with fewer edges and smaller spanning ratio as an open problem for future research. We study the problem when obstacles are axis-parallel boxes. It is worth studying the problem when obstacles are convex.





\end{document}